\theoremstyle{plain}
\newtheorem{theo}{Theorem}[section]
\newtheorem{bsp}{Example}[section]
\newtheorem{lem}{Lemma}[section]
\newtheorem{defi}{Definition}[section]
\renewcommand{\[}{\begin{eqnarray*}}
\renewcommand{\]}{\end{eqnarray*}}
\newcommand{\la}{\begin{eqnarray}}
\newcommand{\al}{\end{eqnarray}}
\renewcommand{\epsilon}{\varepsilon}
\renewcommand{\phi}{\varphi}
\newcommand{\N}{{\mathbb N}}
\newcommand{\Z}{{\mathbb Z}} 
\renewcommand{\P}{{\mathbb P}}  
\renewcommand{\theta}{\vartheta}
\begin{document}

\begin{frontmatter}

%==========================================================================

\title{Rigorous Computing of Rectangle Scan Probabilities for Markov Increments}

\runtitle{Rectangle Scan Probabilities for Markov Increments%, \today
}

\begin{aug}
\author{\fnms{Jannis} \snm{Dimitriadis}%\thanksref{t1,t2}
\ead[label=e1]{dimitria@uni-trier.de}}
%\thanksref{t1}
 %\ead[label=u1,url]{http://www.foo.com}
 %}

%\thankstext{t1}{Some comment}
%\thankstext{t2}{First supporter of the project}
%\thankstext{t3}{Second supporter of the project}
\runauthor{J.~Dimitriadis, \today}

\affiliation{Universit\"at Trier} 
 {\rm \today}\\
 {\footnotesize\tt \jobname.tex}

\address{Universit\"at Trier\\
Fachbereich IV - Mathematik\\ 
% Str \\
54286 Trier \\
Germany\\
\printead{e1}}
\end{aug}

\begin{abstract}
Extending recent work of Corrado, we derive an algorithm that computes rigorous upper and lower bounds for rectangle scan probabilities for Markov increments. We experimentally examine the closeness 
of the bounds computed by the algorithm and we examine the range of tractable input variables. 
\end{abstract}

\begin{keyword}[class=AMS]
\kwd[Primary ]{62P10}.
%\kwd[; secondary ]{26B25}.%\kwd{...}.
\end{keyword}
\begin{keyword}
\kwd{Scan Statistics}
\kwd{Rectangle Probabilities}
\kwd{Markov Chain}
\kwd{Interval Arithmetic}.
\end{keyword}

\end{frontmatter}

%===========================================================================

%{\footnotesize \tableofcontents} %

\section{Introduction}
Let $n$ balls randomly fall into $d$ boxes, each ball with probability $p_i$ into box $i \in \{1,\ldots, d\}$, independently from all the other balls. What is the probability that there exist $\ell$ adjacent boxes in which together lie more than $k$ balls? 
Formally, if we turn to compute the probability of the complement: Let $N\sim \mathrm{M}_{n,p}$ be a multinomially distributed random variable. Task: Compute \[ \P\left(N_1+\ldots+ N_{\ell}\le k ,\ldots, N_{d-\ell+1}+\ldots+ N_{d} \le k \right)\] 
In this paper, we derive an algorithm that allows fast computation of this probability.

Such probabilities are needed as p-values for tests that check data on clusters. For example: Let $n=500$ patients arrive at a clinic in $d=365$ days. We compute the probability that there exist three successive days in which together more than $15$ patients arrive. From the line for $k=15$ in Table~\ref{tab:ScanEx} on page~\pageref{tab:ScanEx} below, we get the approximate value $1-0.9979961=0.0020039$ with an absolute error less than $10^{-7}$. As this probability is so small we would, if the described event occurs, reject the hypothesis that the patients arrived independently and hence suspect that there must be a reason for this cluster.

The support $D=\{x \in \N_0^d : x_1+\ldots+x_d=n \}$ of the multinomial distribution $\mathrm{M}_{n,p}$ is finite. Hence we could compute the desired probability as follows: For each $x \in D$ with $x_1+\ldots+x_\ell \le k, \ \ldots ,\ x_{d-\ell+1}+\ldots+x_d \le k$ compute the probability $\P(N=x)=
n!/(x_1!\ldots x_d!)p_1^{x_1}\ldots p_d^{x_d}$ and sum up these values. But because the support $D$ is large, this procedure takes  much time.
For example: If $n = 15, d=12$ it took $4$ 
seconds to compute the probability $\pi:=\P(N_1+N_2+N_3\le 5,\ldots,N_{d-2}+N_{d-1}+N_d \le 5)$ on a $3$ GHz desktop pc, for $n= 15, d= 25$ it already took $8$-$9$ hours.

To derive a faster method in Sections 2 and 3 of this paper, we use a fact already utilized by Corrado \cite{Corrado-Paper}, namely that the multinomially distributed random variable $N$ is a Markov increment, see Section 4. 
In this paper, a {\bf Markov increment} is a vector $(Y_1,\ldots,Y_d)$ of discretely distributed random variables with values in a group $({\cal X},\cdot)$ with the property that $(Y_1,Y_1\cdot Y_2,\ldots,Y_1\cdots Y_d)$ is a Markov chain.
Our method actually works for Markov increments in this generality. For example, the computation of the probability $\pi$ for $n= 15, d= 25$ with the new method takes less than one second.

In Sections 5 and 6 we turn to computer-implementations of our algorithm within the IEEE-754-standard \cite{IEEE754Standard} for floating point computer arithmetic. The floating point
number systems according to the IEEE-754-standard that usual computers work with have the following properties: The exact result of an operation on two floating point numbers, e.g. addition, need not be a floating point number again. In that case, the computer returns a floating point number that is as close as possible to the exact result. The difference between the returned value and the exact result is called {\bf rounding error}. Because of rounding errors, computed values, e.g. probabilities, are usually just approximations for the exact values and the goodness of the approximation is not known. One can switch the rounding mode of the machine in such a way, that in every operation it returns the minimal floating point number which is greater or equal than the exact result. This ``rounding up" mode can be used to compute upper bounds for the exact value, if only positive numbers occur and only additions and multiplications are performed. In the same way, per ``rounding down" mode, lower bounds can be computed. Thus one gets an interval whose bounds are floating point numbers and in which the exact value is known to lie. The accuracy of the approximations can easily be estimated, because the two bounds of the interval are known. In Section \ref{practsect}, we present an implementation of our algorithm within {\tt R}. For definiteness, we assume that all computations are done in double-precision according to the IEEE-754-standard. We analyze the accuracy of the {\tt R}-implementation and compare it to the best possible accuracy in IEEE-Double-Precision-computations of probabilities, which we examine in Section 5. 

To sum up: This work extends Corrado`s by clarifying the underlying Markov increment structure, by allowing the computation of scan probabilities and by providing rigorous numerical bounds.

\section{An algorithm that computes rectangle probabilities for Markov increments}\label{AlgoKapitel}
We derive an algorithm that computes rectangle probabilities for Markov increments. It is based on the following recursion formula:

\begin{theo} Let $Y=(Y_k)_{k=1}^d$ be Markov increment of a Markov chain $(X_k)_{k=1}^d$ which takes values in a group $({\cal X},\cdot)$. 
Let $A_1,\ldots, A_d \subset {\cal X}$ be countable sets. Then the probabilities 
\[ p(k,x):= \P(X_k = x, Y_1 \in A_1, \ldots , Y_k \in A_k) \] for $k \in \{1,\ldots, d\}$ and $x \in {\cal X}$ fulfill the recursion
\begin{equation} p(k,x)= \sum_{y \in A_k} \P(X_k = x \ | \ X_{k-1}=xy^{-1})p(k-1,xy^{-1}) \label{Recursion}\end{equation} for $k\ge 2$. Here and throughout, we use the convention $\P(A|B)=\P(A\cap B)/\P(B) := 0$ if $\P(B)=0$.
\end{theo}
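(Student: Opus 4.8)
The plan is to fix $k \ge 2$ and $x \in \cX$ and to rewrite $p(k,x)$ by splitting the event $\{Y_k \in A_k\}$ into the countably many atoms $\{Y_k = y\}$, $y \in A_k$, and then to translate the increment conditions back into conditions on the chain $X$, so that the Markov property becomes available. First I would record the deterministic link between the increments and the chain: by definition $(X_k)$ is the chain with $X_k = Y_1 \cdots Y_k$, so $X_k = X_{k-1}\cdot Y_k$ and hence $Y_k = X_{k-1}^{-1}\cdot X_k$ in the group $\cX$. Consequently, on the event $\{X_k = x\}$ the equality $Y_k = y$ holds if and only if $X_{k-1} = x y^{-1}$; that is,
\begin{equation*}
\{X_k = x,\ Y_k = y\} = \{X_k = x,\ X_{k-1} = x y^{-1}\}.
\end{equation*}

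Writing $B_{k-1} := \{Y_1 \in A_1, \ldots, Y_{k-1}\in A_{k-1}\}$ and using countable additivity to decompose over $y \in A_k$, I would obtain $p(k,x) = \sum_{y\in A_k} \P(X_k = x,\ X_{k-1} = xy^{-1},\ B_{k-1})$. The next step is the key observation that $B_{k-1}$ is determined by $X_1, \ldots, X_{k-1}$ alone: each increment $Y_j$ is a function of the chain up to time $j$ (indeed $Y_1 = X_1$, and $Y_j = X_{j-1}^{-1} X_j$ for $j\ge 2$), so $B_{k-1}$ lies in $\sigma(X_1, \ldots, X_{k-1})$. Because $(X_k)$ is a Markov chain, conditionally on the present state $\{X_{k-1} = xy^{-1}\}$ the value $X_k$ is independent of the past event $B_{k-1}$, so whenever $\P(X_{k-1} = xy^{-1}) > 0$ one may factor
\begin{equation*}
\P(X_k = x,\ X_{k-1} = xy^{-1},\ B_{k-1}) = \P(X_k = x \mid X_{k-1} = xy^{-1})\,\P(X_{k-1} = xy^{-1},\ B_{k-1}),
\end{equation*}
where the second factor is exactly $p(k-1, xy^{-1})$. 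Summing over $y \in A_k$ then yields the recursion \eqref{Recursion}.

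The one point that needs care, and the place I expect the argument to be most delicate, is the degenerate case $\P(X_{k-1} = xy^{-1}) = 0$. Here the conditional probability is set to $0$ by the stated convention, while simultaneously $p(k-1, xy^{-1}) \le \P(X_{k-1}=xy^{-1}) = 0$ and the joint probability on the left also vanishes, so both sides of the factored identity equal $0$ and the corresponding summand contributes nothing. Thus the convention makes the factorization valid for \emph{every} $y \in A_k$ with no positivity hypothesis, and interchanging $\P$ with the sum over the countable set $A_k$ is legitimate by nonnegativity and countable additivity. Assembling these steps completes the proof.
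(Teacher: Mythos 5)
Your proof is correct and follows essentially the same route as the paper's: decompose over $y \in A_k$, rewrite $\{X_k = x,\, Y_k = y\}$ as $\{X_k = x,\, X_{k-1} = xy^{-1}\}$ (the paper phrases this via the bijectivity of $(x_1,x_2)\mapsto x_1^{-1}x_2$ in its first argument), and apply the Markov property, noting that $\{Y_1\in A_1,\ldots,Y_{k-1}\in A_{k-1}\}$ is determined by $X_1,\ldots,X_{k-1}$. Your explicit treatment of the null case $\P(X_{k-1}=xy^{-1})=0$ is a point the paper leaves to its stated convention, and it is handled correctly.
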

\begin{proof}
The functions $f_k: {\cal X}^2 \to {\cal X}$ defined by $f_k(x_1,x_2)=x_1^{-1}x_2$ have the property that $Y_k=f_k(X_{k-1},X_k)$ and $f_k(\cdot, x ) $ is bijective for every $x \in {\cal X}$. Using this (which is actually all we need, so the method works not only for Markov increments but actually for any functions of two successive states of a Markov chain having the above bijectivity property) and writing $g_k(x,\cdot):=f_k(\cdot,x)^{-1}$, we get: 
\[ \lefteqn{\P(X_k=x, Y_1 \in A_1,\ldots, Y_k \in A_k)}\\
&=& \sum_{y \in A_k} \P(X_k=x,Y_k=y,Y_1\in A_1,\ldots, Y_{k-1} \in A_{k-1})\\
&=& \sum_{y \in A_k} \P(X_k=x,X_{k-1}=g_k(x,y),Y_1 \in A_1, \ldots, Y_{k-1}\in A_{k-1})\\
&=& \sum_{y \in A_k} \P(X_k=x|X_{k-1}=g_k(x,y))\\
& & \hspace{5 ex}\times\, \P(X_{k-1}=g_k(x,y),Y_1 \in A_1, \ldots, Y_{k-1}\in A_{k-1})
\] In the last step the Markov property was used.
\end{proof}

From the recursion formula we can derive the following algorithm that computes the probability $\P(Y_1 \in A_1,\ldots,Y_d \in A_d)$. Let $A_1,\ldots,A_d$ be finite, so that we get a finite algorithm. 
\vspace{0.2cm}

\fbox{\parbox{\linewidth}{
{\bf Algorithm A:}
\begin{enumerate}
\item For every $x \in A_1$ compute the value $p(1,x)=\P(X_1 = x)$
\item For every $k \in \{2,\ldots, d\}$:\\
For every $x \in A_1\cdot \ldots \cdot A_k$ compute the value $p(k,x)$ with formula  \eqref{Recursion}
\item Compute\\ $\P(Y_1 \in A_1,\ldots,Y_d \in A_d)=\displaystyle{\sum_{x \in A_1\cdot \ldots \cdot A_d}} \P(X_d=x , Y_1 \in A_1,\ldots,Y_d \in A_d)$
\end{enumerate}}}

\vspace{0.2cm}
\noindent Here, let $A_1 \cdots A_n := \{ a_1\cdots a_n : a_1 \in A_1,\ldots, a_n \in A_n \}$, 
if ${\cal X}$ is a group and $A_1,\ldots, A_n \subset {\cal X}$.

\section{Computing rectangle scan probabilities for Markov increments}
In this section we describe how to compute a rectangle scan probability
\[q:=P(Y_1\cdot\ldots\cdot Y_\ell \in A_1, \ldots, Y_{d-\ell+1}\cdot\ldots \cdot Y_d \in A_{d-\ell+1})\]
 for a Markov increment $Y$.

We use the following obvious and well-known lemma:
\begin{lem}
Let ${\cal X}$ be a countable set and $(X_k)_{k=1}^d$ an ${\cal X}$-valued Markov chain. Let $W_k := (X_{k},\ldots, X_{k+\ell-1})$. Then $(W_k)_{k=1}^{d-\ell+1}$ is an ${\cal X}^\ell$-valued Markov chain with transition probabilities
\[ \P(W_{k+1}=w\ |\  W_k=v) = \P(X_{k+\ell } =w_{\ell}\ |\  X_{k+\ell-1 } =v_{\ell})
\label{uewkeitentupel}
\]
for $v,w \in {\cal X}^\ell$ with $\P(W_k=v)>0$ and $v_2=w_1,\ldots,v_\ell = w_{\ell -1}$.
\end{lem}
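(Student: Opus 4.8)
The plan is to establish two things about the sliding-window process $(W_k)$: that it is a Markov chain, and that its transition probabilities have the claimed form. The key structural observation is that consecutive windows $W_k = (X_k,\ldots,X_{k+\ell-1})$ and $W_{k+1} = (X_{k+1},\ldots,X_{k+\ell})$ overlap in all but one coordinate. The first $\ell-1$ entries of $W_{k+1}$ are literally the last $\ell-1$ entries of $W_k$; only the new coordinate $X_{k+\ell}$ is added. This deterministic overlap is what forces the constraint $v_2=w_1,\ldots,v_\ell=w_{\ell-1}$ in the statement, and it is the crux of the whole computation.

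First I would verify the Markov property for $(W_k)$. I would write out $\P(W_{k+1}=w \mid W_k=v, W_{k-1}, \ldots, W_1)$ and observe that conditioning on the entire past of the $W$-chain is the same as conditioning on $X_1,\ldots,X_{k+\ell-1}$, since the windows $W_1,\ldots,W_k$ jointly determine exactly these states. The only randomness in $W_{k+1}$ beyond what is already fixed by $W_k=v$ is the single new state $X_{k+\ell}$. Hence the conditional law of $W_{k+1}$ reduces to the conditional law of $X_{k+\ell}$ given the past, and by the Markov property of $(X_k)$ this depends only on $X_{k+\ell-1}=v_\ell$. This simultaneously proves the Markov property and identifies the transition probability as $\P(X_{k+\ell}=w_\ell \mid X_{k+\ell-1}=v_\ell)$.

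Second I would treat the compatibility condition carefully. The formula is asserted only for $v,w$ satisfying $v_2=w_1,\ldots,v_\ell=w_{\ell-1}$; for incompatible pairs the transition probability is simply zero, since $W_{k+1}$ can never place values in its first $\ell-1$ slots that disagree with the last $\ell-1$ slots of $W_k$. So the statement is really a case distinction, and I would note that the displayed formula covers the compatible case while the incompatible case is automatic from the deterministic overlap. The hypothesis $\P(W_k=v)>0$ is needed so that the conditional probabilities are well-defined in the first place, consistent with the convention introduced after Theorem~2.1.

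The calculation is routine and the lemma is, as the authors say, essentially obvious. The only point requiring mild care is \emph{the bookkeeping}: being precise that conditioning on $W_1,\ldots,W_k$ is equivalent to conditioning on $X_1,\ldots,X_{k+\ell-1}$, and checking the index shifts ($X_{k+\ell}$ versus $w_\ell$, $X_{k+\ell-1}$ versus $v_\ell$) line up correctly. I expect the main obstacle to be purely notational — keeping the window indices and the within-window coordinate indices straight — rather than any genuine mathematical difficulty.
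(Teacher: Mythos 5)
Your proposal is correct, but note that the paper offers no proof at all for this lemma: it is stated as ``obvious and well-known'' and used immediately, so there is no argument of the paper's to compare against. Your verification --- reducing conditioning on $W_1,\ldots,W_k$ to conditioning on $X_1,\ldots,X_{k+\ell-1}$, using the deterministic overlap to see that only $X_{k+\ell}$ is new, invoking the Markov property of $(X_k)$, and (implicitly, via the tower property) concluding that a past-conditional law depending only on the current state yields the Markov property with the stated transitions --- is exactly the standard argument the authors are suppressing, including the correct handling of the incompatible case (probability zero, which is why the displayed formula is restricted to $v_2=w_1,\ldots,v_\ell=w_{\ell-1}$) and of the positivity hypothesis $\P(W_k=v)>0$.
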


The desired rectangle scan probability for the Markov increment $Y$ can be written as a rectangle probability for the increment $V$ of $W$: If we set $B_k:=\{(y_1,\ldots,y_\ell) \in {\cal X}^\ell | y_1\cdot \ldots\cdot y_\ell \in A_k\}$
we have
\[q= \P(V_1 \in B_1,\ldots, V_{d-\ell+1}\in B_{d-\ell+1}) \]
because $V_k= (X_{k}-X_{k-1},\ldots,X_{k+\ell-1}-X_{k+\ell})$ for $k \in \{2,\ldots,d-\ell+1\}$.

The sets $B_1,\ldots,B_{d-\ell +1}$ are possibly infinite so the Algorithm A from the last section would not work. But if there exist finite sets $M_1,\ldots,M_{d-\ell+1}\subset {\cal X}^\ell$ with 
\[\P(V_1 \in B_1,\ldots, V_{d-\ell+1}\in B_{d-\ell+1})  =\P(V_1 \in M_1,\ldots, V_{d-\ell+1}\in M_{d-\ell+1}) \]   
we can apply the Algorithm A and thus are able to compute the desired probability.

Example: If ${\cal X}=(\Z,+)$ and $Y$ is a Markov increment with $Y_1,\ldots,Y_{d} \ge 0$, then for finite sets $A_1,\ldots, A_{d-\ell+1} \subset \Z$ the probability 
\[ \P(Y_1+\ldots+Y_\ell \in A_1, \ldots, Y_{d-\ell+1}+\ldots+Y_d \in A_d ) \]
equals 
\[ \P( (Y_1,\ldots,Y_\ell)\in M_1, \ldots, (Y_{d-\ell+1},\ldots,Y_{d}) \in M_{d-\ell+1} ) \]
with $M_k:=\{ (y_1,\ldots, y_\ell) \in \N_0^{\ell} | y_1+\ldots+y_\ell \in A_k\}$, which are finite. 

\section{Examples for Markov increments: Multinomially and multivariate hypergeometrically distributed random vectors}\label{TransitionProbs}
By $\mathrm{b}_{n,p}(k)={n \choose k}p^k(1-p)^{n-k}$ we denote the binomial density with parameters $n \in \N$ and $p \in [0,1]$. By $\mathrm{h}_{n,r,b}(k)={r \choose k}{b\choose n-k}/{r+b \choose n}$ we denote the hypergeometrical density with parameters $r,b \in \N_0$ and $n \in \{1,\ldots, r+b\}$. 

Multinomially distributed random vectors as well as multivariate hypergeometrically distributed random vectors are Markov increments, hence the results from the last two sections are applicable in these cases. More precisely, we have the following two propositions, as easy calculation with density formulas and cancelling yield.
\begin{bsp}
Let $(N_1,\ldots,N_d) \sim \mathrm{M}_{n,p}$ be a multinomially distributed random variable and $S_k := \sum_{i=1}^k N_i$. Then $(S_1,\ldots,S_d)$ is a Markov chain with 
\[\P ( S_{k+1}= x | S_k = y ) = \mathrm{b}_{n-y,p_{k+1}/\sum_{i=k+1}^d p_i}(x-y)\]
\end{bsp}
\begin{bsp}
Let $(N_1,\ldots,N_d) \sim \mathrm{H}_{n,(m_1,\ldots,m_d)}$ be a multivariate hypergeometrically distributed random variable, i.e. $\P(N_1=k_1,\ldots,N_d=k_d)=$ \linebreak ${m_1 \choose k_1}\ldots {m_d \choose k_d}/{m_1+\ldots+m_d \choose n}$ for $k_1 \in \{0,\ldots,m_1\}, \ldots, k_d \in \{0,\ldots, m_d\}$ with $k_1+\ldots+k_d=n$,  and $S_k := \sum_{i=1}^k N_i$. Then $(S_1,\ldots,S_d)$ is a Markov chain with 
\[\P ( S_{k+1}= x | S_k = y ) = \mathrm{h}_{n-y,m_k,\sum_{i=k+1}^d m_i}(x-y)\]
\end{bsp}

\section{Definitions and notations for accuracy analyses of algorithms}
In this section we define terms we need to precisely describe the behaviour and the accuracy of numerical algorithms. For $M \subset {]0,\infty[}$ let $-M:= \{ - x : x \in M\}$ and $\pm M := M \cup (-M)$.

The {\bf IEEE-Double-Precision-Number-System} is the set
\[ \mathrm{IEEE\text{-}Double} :=  \pm F \cup \pm G \cup \{0,-\infty,\infty\}\] 
with
$ F:=\left\{m\cdot2^{e}: m \in \{2^{52},\ldots, 2^{53}-1\}, e \in \{-1074,\ldots, 971\} \right\}
$
and
$G:=\left\{k\cdot{2^{-1074}}: k \in \{1,\ldots,2^{52}-1\}\right\}$, compare \cite{IEEE754Standard}. The values $k/2^{52}$ in the definition of $G$ and the values $(m-2^{52})/2^{52}$ in the definition of $F$ are called {\bf mantissas} of the considered IEEE-Double-Numbers.
We consider the calculation of probabilities on computation systems that use IEEE-Double-Precision-Numbers.
Hence, every computable probability lies in the set
$ \mathrm{IEEE\text{-}Double} \cap [0,1] = G \cup \left\{m\cdot2^{e}: m \in \{2^{52},\ldots, 2^{53}-1\}, e \in \{-1074,\ldots, -53\} \right\}\cup\{0,1\}$, the minimal computable probability which is greater than zero is
$ \min\{ x \in \mathrm{IEEE\text{-}Double} : x> 0 \} = 2^{-1074} \approx 5\cdot 10^{-324} $
and the maximal computable probability which is less than one is
$ \max\{ x \in \mathrm{IEEE\text{-}Double} : x<1 \} = 1-2^{-53} \approx 1- 10^{-16} $.

We fix an object not belonging to the set $\mathrm{IEEE\text{-}Double}$, call it $\mathrm{NaN}$ for "Not a Number", and define the four operations
\[ \overline{+},\underline{+},\overline{\cdot}, \underline{\cdot} :\mathrm{IEEE\text{-}Double}\to \mathrm{IEEE\text{-}Double}\cup \{ \mathrm{NaN}\} \] 
For $x,y \in  \mathrm{IEEE\text{-}Double}$ and $\circ \in \{ +,\cdot\} $: 
\[ x \overline{\circ} y := \min \{ z \in \mathrm{IEEE\text{-}Double} : z \ge x \circ y \}\\
x \underline{\circ}y := \max \{ z \in \mathrm{IEEE\text{-}Double} : z \le x \circ y \}
\]
except for the following cases:
If  $x = 0$ and $y \in \{-\infty, \infty\}$ or $y = 0$ and $x \in \{-\infty, \infty\}$ then $x \overline{\cdot} y := x \underline{\cdot} y := \mathrm{NaN}$. If $x=-\infty$ and $y = \infty$  or $y = - \infty$ and $x = \infty $ then $x \overline{+} y := x \underline{+} y := \mathrm{NaN}$.
Note that the associative law does not hold for these four operations. For example let $a=-1, b=1, c=2^{-53}$, then we have $a\underline{+}(b\underline{+}c)=0\not= 2^{-53}=(a\underline{+}b)\underline{+}c$.

For the calculation of error bounds for the Algorithm A derived in Section~\ref{AlgoKapitel}, we use the following simple fact:
\begin{lem} Let $\circ \in \{+,\cdot\}$, $x,y \in {]0,\infty[}$ and $b_1,b_2,c_1,c_2 \in \mathrm{IEEE\text{-}Double}$ with $b_1\le x\le c_1$ and $b_2 \le y \le c_2$. Then
\[ b_1 \underline{\circ} b_2 \le x \circ y \le c_1 \overline{\circ} c_2 \]
\label{BasicPrinciple}
\end{lem}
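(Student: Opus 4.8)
The plan is to decouple the stated double inequality into its two halves, $b_1 \underline{\circ} b_2 \le x \circ y$ and $x \circ y \le c_1 \overline{\circ} c_2$, and to prove each by composing two elementary observations: first, that the \emph{exact} operation $\circ$ is monotone in each argument; and second, that directed rounding is a one-sided sandwich around the exact result. The latter is immediate from the definitions, since $c_1 \overline{\circ} c_2$ is a minimum over the $\mathrm{IEEE\text{-}Double}$ numbers lying above $c_1 \circ c_2$ and $b_1 \underline{\circ} b_2$ is a maximum over those lying below $b_1 \circ b_2$. Chaining the two observations then yields each half.

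For the upper bound I would first note that $x > 0$ together with $x \le c_1$ forces $c_1 > 0$, and likewise $c_2 > 0$; hence both operands on the right are strictly positive. On the positive reals both $+$ and $\cdot$ are nondecreasing in each argument, so $x \circ y \le c_1 \circ c_2$. Because $c_1, c_2 > 0$, the value $c_1 \circ c_2$ is never one of the excluded expressions $0 \cdot \infty$ or $\infty - \infty$, so $c_1 \overline{\circ} c_2 = \min\{ z \in \mathrm{IEEE\text{-}Double} : z \ge c_1 \circ c_2 \}$ is a genuine number — the defining set is nonempty, as it contains $\infty$, and $\mathrm{IEEE\text{-}Double}$ is finite — and it satisfies $c_1 \overline{\circ} c_2 \ge c_1 \circ c_2$ by definition. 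Combining gives $x \circ y \le c_1 \circ c_2 \le c_1 \overline{\circ} c_2$.

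For the lower bound the same template applies: $b_1 \underline{\circ} b_2 = \max\{ z \in \mathrm{IEEE\text{-}Double} : z \le b_1 \circ b_2 \} \le b_1 \circ b_2$, and it remains to show $b_1 \circ b_2 \le x \circ y$. For $\circ = +$ this is immediate from $b_1 \le x$ and $b_2 \le y$, with no restriction on the signs. The one genuinely delicate point — and the step I expect to be the main obstacle — is the multiplicative case: $b_1 b_2 \le x y$ need not hold if $b_1, b_2$ are allowed to be negative, since two large negative lower bounds produce a large positive product. I would therefore apply the lemma only in the regime in which it is used by Algorithm A, where every quantity being bounded is a probability and the lower bounds $b_1, b_2$ are nonnegative (note that rounding a positive number downward cannot fall below $0$, so nonnegativity is preserved throughout the recursion). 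Under $0 \le b_i$ multiplicative monotonicity holds, the operands are finite since $b_i \le x < \infty$ rules out $b_i = \infty$, and nonnegative, ruling out the $-\infty$ and $0 \cdot \infty$ NaN cases; hence $b_1 \underline{\circ} b_2$ is again a genuine number and the chain $b_1 \underline{\circ} b_2 \le b_1 \circ b_2 \le x \circ y$ closes the argument.
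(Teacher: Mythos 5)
Your argument is correct, and there is nothing in the paper to compare it against: Lemma~\ref{BasicPrinciple} is introduced there as a ``simple fact'' and stated without any proof. Your decomposition---monotonicity of the exact operation on the relevant operands, chained with the one-sided rounding inequalities $b_1 \underline{\circ}\, b_2 \le b_1 \circ b_2$ and $c_1 \circ c_2 \le c_1 \overline{\circ}\, c_2$ that hold by the very definition of $\underline{\circ}$ and $\overline{\circ}$---is exactly the intended content, and your bookkeeping that the $\mathrm{NaN}$ cases cannot arise in the upper half (since $0 < x \le c_1$ and $0 < y \le c_2$ force $c_1, c_2 > 0$, excluding both $0 \cdot \infty$ and $\infty + (-\infty)$) is the part that actually needs checking.

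More importantly, you have caught a genuine defect in the statement itself: as literally formulated, the multiplicative lower-bound half is false. Take $\circ = {\cdot}$, $x = y = 1$, $b_1 = b_2 = -10$; all hypotheses hold, yet $b_1 \,\underline{\cdot}\, b_2 = 100 > 1 = x \cdot y$. (A further pathology: $b_1 = -\infty$, $b_2 = 0$ is admissible and yields $b_1 \,\underline{\cdot}\, b_2 = \mathrm{NaN}$.) Your repair---invoking the lemma only with $b_1, b_2 \ge 0$, and noting that the downward-rounded quantities in Algorithm~A stay nonnegative because $0 \in \mathrm{IEEE\text{-}Double}$ and rounding a nonnegative real downward cannot pass below $0$---is sound and matches precisely the regime in which the paper uses the lemma (probabilities and transition probabilities). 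A marginally cleaner fix repairs the statement rather than its scope of application: since $x, y > 0$ and $0 \in \mathrm{IEEE\text{-}Double}$, one may always replace $b_i$ by $\max(b_i, 0)$ without violating $b_i \le x$ resp.\ $b_i \le y$, so adding the hypothesis $b_1, b_2 \ge 0$ costs nothing and makes the lemma true verbatim. As your proof shows, the additive case and the entire upper-bound half need no such restriction.
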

For a quantitative analysis of the accuracy of computed probabilities we need to consider absolute and relative errors. 
For $p, \tilde{p}\in [0,1]$ we define the {\bf absolute error}
\[ e_{\mathrm{abs}}(p,\tilde{p}) := |p-\tilde{p}| \]
 and the {\bf relative error}
\[ e_{\mathrm{rel}}(p,\tilde{p}) :=  \max\left\{\frac{e_{\mathrm{abs}}(p,\tilde{p})}{p},\frac{e_{\mathrm{abs}}(1-p,1-\tilde{p})}{1-p}\right\}= \frac{|p-\tilde{p}|}{\min(p,1-p)}\]
 in the approximation of $p$ by $\tilde{p}$, with $\frac 00:= 0$ and $\frac x0:=\infty$ for $x>0$. 
For $a,b \in {[0,1]}$ with $a\le b$ and $\tilde{p}\in [a,b]$ we further define
the {\bf absolute error}
\[ e_{\mathrm{abs}}([a,b],\tilde{p}):= \max_{p \in [a,b]} e_{\mathrm{abs}}(p,\tilde{p}) =\max\{ b-\tilde{p}, \tilde{p}-a\}\]
and the {\bf relative error}  
\[ e_{\mathrm{rel}} ([a,b],\tilde{p}):= \max_{p \in [a,b]} e_{\mathrm{rel}}(p,\tilde{p}) \]
in the approximation of a probability which is known to lie in $[a,b]$ by $\tilde{p}$.
We get simple formulas for $e_{\mathrm{rel}} ([a,b],\tilde{p})$  in the following two cases. 
If $a,b \in {[0,1/2]}$ or $a,b \in {[1/2,1]}$ we have
\[e_{\mathrm{rel}} ([a,b],\tilde{p})= \max \{ e_{\mathrm{rel}}(a,\tilde{p}), e_{\mathrm{rel}}(b,\tilde{p})\}\]
Hence, if $a,b \in {]0,1/2]}$ we have 
\[ e_{\mathrm{rel}} ([a,b],\tilde{p})=\max\{ \frac{\tilde{p}-a}{a},\frac{b-\tilde{p}}{b}\} \]
and if $a,b \in {[1/2,1[}$ we have 
\[ e_{\mathrm{rel}}([a,b],\tilde{p})=\max\{ \frac{\tilde{p}-a}{1-a},\frac{b-\tilde{p}}{1-b}\}\] 
For accuracy measurements in interval calculations we use the following mini-max errors: 
\begin{defi}
For $a,b \in {[0,1]}$ with $a\le b$ we define the {\bf absolute error} 
\[ e_{\mathrm{abs}}([a,b]):= \min_{\tilde{p} \in [a,b]} e_{\mathrm{abs}}([a,b],\tilde{p})=e_{\mathrm{abs}}([a,b],\frac{a+b}2)=\frac{b-a}{2} \]
and the {\bf relative error}
\[ e_{\mathrm{rel}}([a,b]) := \min_{\tilde{p} \in [a,b]} e_{\mathrm{rel}}([a,b],\tilde{p})\]
in the approximation of a probability by the interval $[a,b]$. 
\end{defi}
Easy calculations yield the following formulas:
\begin{theo}
If $a,b \in {[0,1/2]}$ we have
\[ \forall {\tilde{p} \in [a,b]}: e_{\mathrm{rel}}([a,b],\tilde{p})\le e_{\mathrm{rel}}([a,b],\frac {2 ab}{a+b})=\frac{b-a}{b+a} \]
Hence
\[ e_{\mathrm{rel}}([a,b])=\frac {b-a}{b+a} \]
If $a,b \in {[1/2,1]}$ we have
\[ \forall {\tilde{p} \in [a,b]}: e_{\mathrm{rel}}([a,b],\tilde{p})\le e_{\mathrm{rel}}([a,b],\frac {a+b-2ab}{2-a-b})=\frac{b-a}{2-a-b} \]
Hence
\[ e_{\mathrm{rel}}([a,b])=\frac{b-a}{2-a-b} \]
\end{theo}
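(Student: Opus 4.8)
The plan is to exploit the explicit formula for the inner maximum $e_{\mathrm{rel}}([a,b],\tilde p)$ already derived in the excerpt, which turns the computation of $e_{\mathrm{rel}}([a,b])=\min_{\tilde p\in[a,b]}e_{\mathrm{rel}}([a,b],\tilde p)$ into an elementary one-variable minimization. For the case $a,b\in{]0,1/2]}$ that formula reads $e_{\mathrm{rel}}([a,b],\tilde p)=\max\{(\tilde p-a)/a,(b-\tilde p)/b\}$. The key structural observation is that $(\tilde p-a)/a$ is strictly increasing in $\tilde p$ whereas $(b-\tilde p)/b$ is strictly decreasing; hence their pointwise maximum is V-shaped in $\tilde p$ and its minimum over $[a,b]$ is attained precisely where the two arguments coincide.

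Solving $(\tilde p-a)/a=(b-\tilde p)/b$ gives $\tilde p(a+b)=2ab$, i.e. the harmonic mean $\tilde p^\ast=2ab/(a+b)$. First I would confirm admissibility $\tilde p^\ast\in[a,b]$: since $a\le b$ one has $2ab\ge a(a+b)$ and $2ab\le b(a+b)$, so $a\le\tilde p^\ast\le b$. Substituting $\tilde p^\ast$ into either argument yields the common value $(b-a)/(a+b)$, which is therefore the value of the maximum at $\tilde p^\ast$. This simultaneously gives the inequality $e_{\mathrm{rel}}([a,b],\tilde p)\le e_{\mathrm{rel}}([a,b],\tilde p^\ast)=(b-a)/(b+a)$ for every $\tilde p\in[a,b]$ and, minimizing over $\tilde p$, the identity $e_{\mathrm{rel}}([a,b])=(b-a)/(b+a)$. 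The boundary case $a=0$ I would treat separately via the conventions $x/0=\infty$ and $0/0=0$: then any $\tilde p>0$ yields $e_{\mathrm{rel}}([0,b],\tilde p)=\infty$ while $\tilde p=0$ yields value $1=(b-0)/(b+0)$, so the formula persists.

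For $a,b\in[1/2,1]$ I would either rerun the argument with the companion formula $e_{\mathrm{rel}}([a,b],\tilde p)=\max\{(\tilde p-a)/(1-a),(b-\tilde p)/(1-b)\}$, whose crossing point solves $\tilde p(2-a-b)=a+b-2ab$, or---more quickly---invoke the reflection $p\mapsto 1-p$. Since $\min(p,1-p)$ is invariant under this reflection, $e_{\mathrm{rel}}(p,\tilde p)=e_{\mathrm{rel}}(1-p,1-\tilde p)$ and thus $e_{\mathrm{rel}}([a,b])=e_{\mathrm{rel}}([1-b,1-a])$ with $1-b,1-a\in[0,1/2]$. Applying the first case then gives $e_{\mathrm{rel}}([a,b])=\big((1-a)-(1-b)\big)/\big((1-a)+(1-b)\big)=(b-a)/(2-a-b)$, and the reflected optimizer becomes $(a+b-2ab)/(2-a-b)$.

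I do not anticipate a genuine obstacle: the entire argument rests on the fact that the maximum of an increasing and a decreasing function is minimized at their intersection. The only points needing care are verifying that both minimizers actually lie in $[a,b]$ and handling the endpoints $a=0$ and $b=1$ consistently with the extended-arithmetic conventions $\frac x0=\infty$, $\frac00=0$; both are routine.
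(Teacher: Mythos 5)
Your argument is essentially the computation the paper leaves to the reader (its entire proof is the phrase ``Easy calculations yield''): you take the two-endpoint formula $e_{\mathrm{rel}}([a,b],\tilde p)=\max\{(\tilde p-a)/a,\,(b-\tilde p)/b\}$ valid for $a,b\in{]0,1/2]}$, note that the maximum of a strictly increasing and a strictly decreasing function of $\tilde p$ is minimized exactly at their crossing point, solve for the harmonic mean $\tilde p^{\ast}=2ab/(a+b)$, verify $\tilde p^{\ast}\in[a,b]$, and transfer the case $a,b\in[1/2,1]$ by the reflection $p\mapsto 1-p$, under which $|p-\tilde p|$ and $\min(p,1-p)$ are invariant; the reflected optimizer $(a+b-2ab)/(2-a-b)$ and the value $(b-a)/(2-a-b)$ come out correctly. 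Your separate treatment of the boundary cases $a=0$ and $b=1$ via the conventions $\tfrac x0=\infty$, $\tfrac 00=0$ is not merely routine bookkeeping but genuinely needed, since the paper states the two-endpoint formula only for $a,b\in{]0,1/2]}$ resp.\ $a,b\in{[1/2,1[}$, while the theorem claims the closed ranges $[0,1/2]$ and $[1/2,1]$.

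One sentence of yours must be corrected, though the slip is inherited from the theorem's own wording rather than from your reasoning: you assert $e_{\mathrm{rel}}([a,b],\tilde p)\le e_{\mathrm{rel}}([a,b],\tilde p^{\ast})$ for all $\tilde p\in[a,b]$, but your V-shape argument shows that $\tilde p^{\ast}$ is the \emph{minimizer}, so the inequality must read $\ge$; only then does ``Hence $e_{\mathrm{rel}}([a,b])=(b-a)/(b+a)$'' follow from the definition $e_{\mathrm{rel}}([a,b])=\min_{\tilde p\in[a,b]}e_{\mathrm{rel}}([a,b],\tilde p)$. With $\le$ the displayed claim is false as soon as $0<a<b$: for $[a,b]=[0.02,0.03]$, the paper's own Table~\ref{kleinesfehlerbsptab}, one has $e_{\mathrm{rel}}([a,b],b)=1/2>1/5=e_{\mathrm{rel}}([a,b],\tilde p^{\ast})$. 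The same sign correction applies to the displayed inequality in the case $a,b\in[1/2,1]$, both in the theorem and in your restatement of it. Once that direction is flipped to match your own reasoning, your proof is complete and correct.
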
Note that the absolute error $e_{\mathrm{abs}}([a,b])$ and the relative error $e_{\mathrm{rel}}([a,b])$ need not be reached simultaneously by one of the approximators. It need not be reached at all, as the following example illustrates.
\begin{bsp}
In Table~\ref{kleinesfehlerbsptab} we listed the errors $e_{\mathrm{abs}}([a,b],\tilde{p})$ and $ e_{\mathrm{rel}}([a,b],\tilde{p})$ for $[a,b]=[0.02,0.03]$ and different approximators $\tilde{p}$ . We see that $e_{\mathrm{abs}}([a,b])=0.005$ and $e_{\mathrm{rel}}([a,b])=1/5$. If we take the upper bound $\tilde{p}=b$ as approximator for the unknown probability $p$
, neither $e_{\mathrm{abs}}([a,b],\tilde{p})=e_{\mathrm{abs}}([a,b])$ is reached, nor $e_{\mathrm{rel}}([a,b],\tilde{p})=e_{\mathrm{rel}}([a,b])$.

\begin{center}\begin{table}[h]
\begin{tabular}{ccc}
$\tilde{p}$&$ e_{\mathrm{abs}}([a,b],\tilde{p})$&$ e_{\mathrm{rel}}([a,b],\tilde{p})$\\\hline
$2ab/(a+b)=0.024 $&$ 0.006 $&$ {\bf 1/5}$ \\
$(a+b)/2=0.025$ &$ {\bf 0.005}$ &$ 1/4$\\
$a$ & $0.01$ & $1/3$ \\
$b$ & $0.01$ & $1/2$ 
\end{tabular}
\caption{}
\label{kleinesfehlerbsptab}
\end{table}
\end{center}
If, for example, the unknown probability is $p = (3/10)^3= 0.027$, then the errors are as listed in Table~\ref{kleinesfehlerbsptab2}.
\begin{center}\begin{table}[h]
\begin{tabular}{ccc}
$\tilde{p}$&$ e_{\mathrm{abs}}(p,\tilde{p})$&$ e_{\mathrm{rel}}(p,\tilde{p})$\\\hline
$0.024 $&$ 0.003 $&$  3/27$ \\
$0.025$ &$  0.002$ &$  2/27$\\
$a$ & $0.007$ & $7/27$ \\
$b$ & $0.003$ & $3/27$ 
\end{tabular}
\caption{}
\label{kleinesfehlerbsptab2}
\end{table}
\end{center}
\label{kleinesfehlerbsp}
\end{bsp}
We study the maximal accuracy reachable in double-precision probability calculations:
\begin{defi}
The {\bf maximal accuracy} in a double-precision calculation of a probability $p \in [0,1]$ is $ e_{\mathrm{rel}}(p):= e_{\mathrm{rel}}(I(p))$
where $I(p):= [\max\{ x \in \mathrm{IEEE}\text{-}\mathrm{Double} : x \le p \} ,\min\{ x \in \mathrm{IEEE}\text{-}\mathrm{Double} : x \ge p \}]$ is the minimal interval containing $p$, whose endpoints are IEEE-Double-Precision-Numbers.
\end{defi}
Easy calculation yields
\[ e_{\mathrm{rel}}(p)=\left\{\begin{array} {cl}
\infty &0< p <2^{-1074} \text{ or } 1-2^{-53}<p<1 \\  
\frac{1}{2m+1} &  m \in \{1,\ldots, 2^{52}-1\},\\&  p \in 2^{-1074}\cdot{]m ,m+1[}  \text{ or } p \in  1-2^{-53}\cdot{]m ,m+1[} \\
\frac{1}{2m+1} &  m \in \{2^{52},\ldots, 2^{53}-1\}, e \in \{ -1022, \ldots, -2\},\\&p \in 2^{e-52}\cdot{]m ,m+1[}\\
0 & p \in \mathrm{IEEE-Double} \cap [0,1]
\end{array}\right.
\]

From the last formula it follows that 
\begin{itemize}\item in IEEE-Double floatingpoint arithmetic we are able to approximate probabilities in $[2^{-1074},1-2^{-53}]\cup\{0,1\}$ with finite relative error $e_{\mathrm{rel}}(p)$.
\item for $p \in [2^{-1022},1/2]$, the maximal accuracy satisfies $e_{\mathrm{rel}}(p)\le 1/(2^{53}+1)\approx 1.11\cdot 10^{-16}$.
\end{itemize}

\section{{\tt R} Implementation of Markov increment scan algorithms in interval arithmetic}\label{practsect}

{\texttt R} is an open source software for statistical computations.
We extended {\tt R} by a C-function that, as per C-Standard \cite{C-Standard} and IEEE-754-Standard \cite{IEEE754Standard}, allows the operations on IEEE-Double-Numbers which we defined in the previous section. 
We wrote an {\tt R}-program that implements the Algorithm A from Section \ref{AlgoKapitel} and uses the principle stated in Lemma \ref{BasicPrinciple} to compute bounds for rectangle scan probabilities for Markov increments. We implemented the multinomial and multivariate hypergeometric transition probabilities, as described in Section \ref{TransitionProbs}. In a last step the resulting {\tt R}-implementation of Algorithm A sets the returned value to $1$, if the original return value is greater than $1$.

\subsection{Examples}\label{ExampleSubsect}
For $N \sim \mathrm{M}_{n,p}$ with $n=500$, $d=365$, $p=(1/d,\ldots, 1/d)$ and $k \in \{4,\ldots,32\}$ we computed an upper bound $\overline{p}$ and a lower bound $\underline{p}$ for the probability $\P( \max_{i=1}^{d-2} (N_i+N_{i+1}+N_{i+2}) \le k)$ by an {\tt R}-implementation of the Algorithm A from Section \ref{AlgoKapitel}. 
In Table~\ref{tab:ScanEx} we tabulate the computed bounds $\underline{p},\overline{p}$ and analyze their accuracy. 
Numbers written in typewriter font are hexadecimal.  
The coloumn titled ``$\mathrm{approx}$'' gives the known decimal digits of a value of the ``probability representation number system'' $T$, that lies nearest to the exact value. The probability representation number system $T$ consists of all numbers with 7 decimal digits without leading zeros or nines. We use the notation $.0^x$ as an abbreviation for a decimal point followed by $x$ zeros, analogously $.9^x$. The symbol ? appearing in a number means that the following digits are not exactly known. 

The value $e_{\mathrm{abs}}$ resp. $e_{\mathrm{rel}}$ is the minimal upper bound for $e_{\mathrm{abs}}([\underline{p},\overline{p}])$ resp. $e_{\mathrm{rel}}([\underline{p},\overline{p}])$ 
which has the form $c\cdot 10^k$ where $c$ has $3$ significant digits and $k \in \Z$.

Thus, the line with $k=15$ means that  the probability $\P( \max_{i=1}^{d-2} (N_i+N_{i+1}+N_{i+2}) \le 15)$ lies in the interval $[\underline{p},\overline{p}]$ with 
\[ \overline{p}&=&\texttt{1.fef956911fe58}\cdot 2^{-1}\\&=&(1+15\cdot 16^{-1}+14\cdot 16^{-2}+\ldots + 8 \cdot 16^{-13})\cdot 2^{-1}\\ &=&0.99799604913273309847454584087245166301727294921875  \\\underline{p}&=&\texttt{1.fef95690c7eda}\cdot 2^{-1}\\&=& (1+15\cdot 16^{-1}+\ldots + 10\cdot 16^{-13})\cdot 2^{-1}\\&=&0.9979960490927297644958571254392154514789581298828125 \] 
with all equalities exact.
The minimal upper bound for $e_{\mathrm{abs}}([\underline{p},\overline{p}])$ 
which has the form $c\cdot 10^k$ where $c$ has $3$ significant digits and $k \in \Z$
is $2.01\cdot 10^{-11}$ and the minimal upper bound for $e_{\mathrm{abs}}([\underline{p},\overline{p}])$ which has this form is $ 9.99\cdot 10^{-9}$. 
A value of the number system $T$ which is nearest to the exact probability is $ 0.9979961$. As the numbers of the system $T$ in the interval $[0.001,0.9989999]$ differ by $10^{-7}$, just knowing the approximate value we can infer that the absolute error in this approximation is less than $10^{-7}$.

\begin{table}
{\scriptsize
\begin{tabular}{c||c|c|c||l}
$k$ &\multicolumn{1}{|c|}{$\underline{p},\overline{p}$} &  $e_{\mathrm{abs}}$ & $e_{\mathrm{rel}}$&$\mathrm{approx}$\\\hline\hline
${\bf 4}$ & \multicolumn{1}{|c|}{$0$}& \multicolumn{1}{|c|}{$0$}& $ 0$ & ${\bf 0}$ \\ \hline
${\bf5}$  &$\begin{array}{c}\texttt{1.1c5df1e1a1f83}\cdot 2^{-178}\\ \texttt{1.1c5df1e171043}\cdot 2^{-178}   \end{array}  $ & $ 5.82\cdot 10^{-65}$ & $ 2.01\cdot 10^{-11}$&${\bf.0^{53}28993}$\\\hline
${\bf6}$ &$\begin{array}{c} \texttt{1.b826f22f10057}\cdot 2^{-67}\\ \texttt{1.b826f22ec43c3}\cdot 2^{-67} \end{array}  $ &$ 2.34\cdot 10^{-31}$&$ 2.01\cdot 10^{-11}$&${\bf.0^{19}11651}$ \\\hline
${\bf7}$  &$\begin{array}{c} \texttt{1.b71c492587c97}\cdot 2^{-27} \\ \texttt{1.b71c49253c2df}\cdot 2^{-27}  \end{array}  $ & $ 2.57\cdot 10^{-19}$ & $ 2.01\cdot 10^{-11}$&${\bf.0^{7}12780}$\\\hline
${\bf8}$  &$\begin{array}{c}\texttt{1.98b8351d76fbd}\cdot 2^{-11} \\ \texttt{1.98b8351d309cf}\cdot 2^{-11}  \end{array}  $ & $ 1.57\cdot 10^{-14}$ & $ 2.01\cdot 10^{-11}$&${\bf.0^377957}$\\\hline
${\bf9}$  &$\begin{array}{c} \texttt{1.0f0230ce6f8a1}\cdot 2^{-4}  \\ \texttt{1.0f0230ce40e15}\cdot 2^{-4} \end{array}  $ & $ 1.33\cdot 10^{-12}$ & $ 2.01\cdot 10^{-11}$&${\bf .0661642}$\\\hline
${\bf10}$ &$\begin{array}{c} \texttt{1.826e2adb7befd}\cdot 2^{-2} \\ \texttt{1.826e2adb39686}\cdot 2^{-2} \end{array}  $ & $ 7.57\cdot 10^{-12}$ & $ 2.01\cdot 10^{-11}$&${\bf.3773734}$ \\\hline
${\bf11}$  &$\begin{array}{c} \texttt{1.7131cf887a229}\cdot 2^{-1} \\ \texttt{1.7131cf883a935}\cdot 2^{-1} \end{array}  $ & $ 1.45\cdot 10^{-11}$ & $ 5.19\cdot 10^{-11}$&${\bf.7210832}$\\\hline
${\bf12}$  &$\begin{array}{c} \texttt{1.ce576094ddb84}\cdot 2^{-1} \\ \texttt{1.ce5760948e1f6}\cdot 2^{-1} \end{array}  $ & $ 1.81\cdot 10^{-11}$ & $ 1.87\cdot 10^{-10}$&${\bf.9030104}$\\\hline
${\bf13}$ &$\begin{array}{c} \texttt{1.f1162301d80ec}\cdot 2^{-1} \\ \texttt{1.f1162301827ae}\cdot 2^{-1} \end{array}  $ & $ 1.95\cdot 10^{-11}$ & $ 6.69\cdot 10^{-10}$ &${\bf.9708720}$\\\hline
${\bf14}$ &$\begin{array}{c} \texttt{1.fbef9498b0df9}\cdot 2^{-1} \\ \texttt{1.fbef9498596d7}\cdot 2^{-1} \end{array}  $ & $ 1.99\cdot 10^{-11}$ & $ 2.51\cdot 10^{-9}$&${\bf.9920622}$ \\\hline
${\bf15}$  &$\begin{array}{c} \texttt{1.fef956911fe58}\cdot 2^{-1} \\ \texttt{1.fef95690c7eda}\cdot 2^{-1} \end{array}  $ & $ 2.01\cdot 10^{-11}$ & $ 9.99\cdot 10^{-9}$&${\bf.9979961}$\\\hline
${\bf16}$  &$\begin{array}{c} \texttt{1.ffc1fbbfd6e58}\cdot 2^{-1} \\ \texttt{1.ffc1fbbf7ecb1}\cdot 2^{-1} \end{array}  $ & $ 2.01\cdot 10^{-11}$ & $ 4.24\cdot 10^{-8}$&${\bf.9^352685}$\\\hline
${\bf17}$ &$\begin{array}{c} \texttt{1.fff23b0d23a3c}\cdot 2^{-1} \\ \texttt{1.fff23b0ccb810}\cdot 2^{-1} \end{array}  $ & $ 2.01\cdot 10^{-11}$ & $ 1.91\cdot 10^{-7}$&${\bf.9^389495}$ \\\hline
${\bf18}$  &$\begin{array}{c} \texttt{1.fffd1d22cb527}\cdot 2^{-1} \\ \texttt{1.fffd1d22732da}\cdot 2^{-1} \end{array}  $ & $ 2.01\cdot 10^{-11}$ & $ 9.11\cdot 10^{-7}$&${\bf.9^477980}$\\\hline
${\bf19}$ &$\begin{array}{c} \texttt{1.ffff6d5024936}\cdot 2^{-1} \\ \texttt{1.ffff6d4fcc6e4}\cdot 2^{-1} \end{array}  $ & $ 2.01\cdot 10^{-11}$ & $ 4.59\cdot 10^{-6}$&${\bf.9^556284}$ \\\hline
${\bf20}$  &$\begin{array}{c} \texttt{1.ffffe4570f39a}\cdot 2^{-1} \\ \texttt{1.ffffe456b7146}\cdot 2^{-1} \end{array}  $ & $ 2.01\cdot 10^{-11}$ & $ 2.44\cdot 10^{-5}$&${\bf .9^617567}$\\\hline
${\bf21}$ &$\begin{array}{c} \texttt{1.fffffb08bd13c}\cdot 2^{-1} \\ \texttt{1.fffffb0864ee9}\cdot 2^{-1} \end{array}  $ & $ 2.01\cdot 10^{-11}$ & $ 1.36\cdot 10^{-4}$&${\bf.9^68520?}$ \\\hline
${\bf22}$ &$\begin{array}{c} \texttt{1.ffffff264f47d}\cdot 2^{-1} \\ \texttt{1.ffffff25f7228}\cdot 2^{-1} \end{array}  $ & $ 2.01\cdot 10^{-11}$ & $ 7.91\cdot 10^{-4}$ &${\bf.9^774?}$\\\hline
${\bf23}$  &$\begin{array}{c} \texttt{1.ffffffdc79315}\cdot 2^{-1} \\ \texttt{1.ffffffdc210c0}\cdot 2^{-1} \end{array}  $ & $ 2.01\cdot 10^{-11}$ & $ 4.83\cdot 10^{-3}$&${\bf.9^86?}$\\\hline
${\bf24}$  &$\begin{array}{c} \texttt{1.fffffffa913ba}\cdot 2^{-1} \\ \texttt{1.fffffffa39167}\cdot 2^{-1} \end{array}  $ & $ 2.01\cdot 10^{-11}$ & $ 3.08\cdot 10^{-2}$&${\bf.9^9?}$\\\hline
${\bf25}$  &$\begin{array}{c} \texttt{1.ffffffff53a50}\cdot 2^{-1} \\ \texttt{1.fffffffefb7fe}\cdot 2^{-1} \end{array}  $ & $ 2.01\cdot 10^{-11}$ & $2.04\cdot 10^{-1}$&${\bf.9^9?}$\\\hline
${\bf26}$  &$\begin{array}{c} 1 \\ \texttt{1.ffffffffb44b7}\cdot 2^{-1} \end{array}  $ & $1-\underline{p}$ & $\infty$&${\bf.9^{10}?}$\\\hline
${\bf27}$  &$\begin{array}{c} 1 \\ \texttt{1.ffffffffcf373}\cdot 2^{-1} \end{array}  $ & $1-\underline{p}$ & $\infty$&${\bf.9^{10}?}$\\\hline
${\bf28}$  &$\begin{array}{c} 1 \\ \texttt{1.ffffffffd2fd3}\cdot 2^{-1} \end{array}  $ & $1-\underline{p}$ & $\infty$&${\bf.9^{10}?}$\\\hline
${\bf29}$ &$\begin{array}{c} 1 \\ \texttt{1.ffffffffd37fa}\cdot 2^{-1} \end{array}  $ & $1-\underline{p}$ & $\infty$ &${\bf.9^{10}?}$\\\hline
${\bf30}$  &$\begin{array}{c} 1 \\ \texttt{1.ffffffffd3908}\cdot 2^{-1} \end{array}  $ & $1-\underline{p}$ & $\infty$&${\bf.9^{10}?}$\\\hline
${\bf31}$  &$\begin{array}{c} 1 \\ \texttt{1.ffffffffd392a}\cdot 2^{-1} \end{array}  $ & $1-\underline{p}$ & $\infty$&${\bf.9^{10}?}$\\\hline
${\bf32}$  &$\begin{array}{c} 1 \\ \texttt{1.ffffffffd392a}\cdot 2^{-1} \end{array}  $ & $1-\underline{p}$ & $\infty$&${\bf.9^{10}?}$\\\hline
\end{tabular}
\caption{Upper and lower bounds $\overline{p},\underline{p}$ for $\P( \max_{i=1}^{d-2} N_i+N_{i+1}+N_{i+2} \le k)$ with $N \sim \mathrm{M}_{n,p}$, $n=500$, $d=365$, $p=(1/d,\ldots, 1/d)$ and $k \in \{4,\ldots,32\}$. For details, see Subsection \ref{ExampleSubsect}.}
\label{tab:ScanEx}
}
\end{table}

\subsection{Remarks on numerical computations of multinomial probabilities}
\subsubsection{Relative error of complement probabilities}
In the preceding section we computed the distribution function of a multinomial scan statistic.
For several applications, e.g. multinomial scan test, we want to compute the upper distribution function instead. If we compute its values from the complements $\P( \max_{i=1}^{d} N_i+N_{i+1}+N_{i+2} \ge k)=1-\P( \max_{i=1}^{d} N_i+N_{i+1}+N_{i+2} \le k-1)$  in exact arithmetic, for example by using a suitable software, there is no increse of error. If we do automatic computation of the complement in IEEE-Double-Precision-Number-System, then the error increases for small probabilities. Then, we are not able to approximate probabilities less then $10^{-16}$ with a finite relative error. Compare Table~\ref{tab:ScanExco}. Here, the relative error increases for small probabilities as well as for big probabilities. Small complements of probabilities are lost.
In general, one should try to avoid developing algorithms that complement the computed probability at the end. An algorithm that computes the complement is not equivalent to a direct one.

\begin{table}
{\scriptsize
\begin{tabular}{c||l|c|c||l}
$k$ &\multicolumn{1}{|c|}{$\underline{p},\overline{p}$} &  $e_{\mathrm{abs}}$ & $e_{\mathrm{rel}}$&$\mathrm{approx}$\\\hline\hline
${\bf5}$  &\multicolumn{1}{|c|}{$1$} & $ 0$ & $ 0$&$1$\\\hline
${\bf6}$ &$\begin{array}{c} \texttt{1} \\ \texttt{1.fffffffffffff}\cdot 2^{-1} \end{array}  $ &$ 1-\underline{p}$&$ \infty$&${\bf.9^{15}?}$ \\\hline
${\bf7}$  &$\begin{array}{c} \texttt{1} \\ \texttt{1.fffffffffffff}\cdot 2^{-1}\end{array}  $ & $ 1-\underline{p}$ & $ \infty$&${\bf.9^{15}?}$\\\hline
${\bf8}$  &$\begin{array}{c}  \texttt{1.ffffff9238edc}\cdot 2^{-1} \\ \texttt{1.ffffff9238edb}\cdot 2^{-1} \end{array}  $ & $ 5.55\cdot 10^{-17}$ & $ 4.34\cdot 10^{-9}$&${\bf.9^787220}$\\\hline
${\bf9}$  &$\begin{array}{c}  \texttt{1.ff99d1f2b8b3e}\cdot 2^{-1}\\ \texttt{1.ff99d1f2b8a24}\cdot 2^{-1}\end{array}  $ & $ 1.57\cdot 10^{-14}$ & $ 2.01\cdot 10^{-11}$&${\bf .9^322042}$\\\hline
${\bf10}$ &$\begin{array}{c}  \texttt{1.de1fb9e637e3e}\cdot 2^{-1}\\ \texttt{1.de1fb9e6320eb}\cdot 2^{-1} \end{array}  $ & $ 1.33\cdot 10^{-12}$ & $ 2.01\cdot 10^{-11}$&${\bf.9338358}$ \\\hline
${\bf11}$  &$\begin{array}{c} \texttt{1.3ec8ea92634bd}\cdot 2^{-1}\\ \texttt{1.3ec8ea9242081}\cdot 2^{-1} \end{array}  $ & $ 7.57\cdot 10^{-12}$ & $ 2.01\cdot 10^{-11}$&${\bf.6226266}$\\\hline
${\bf12}$  &$\begin{array}{c} \texttt{1.1d9c60ef8ad96}\cdot 2^{-2}\\ \texttt{1.1d9c60ef0bbae}\cdot 2^{-2} \end{array}  $ & $ 1.45\cdot 10^{-11}$ & $ 5.19\cdot 10^{-11}$&${\bf.2789168}$\\\hline
${\bf13}$ &$\begin{array}{c}  \texttt{1.8d44fb5b8f050}\cdot 2^{-4}\\ \texttt{1.8d44fb59123e0}\cdot 2^{-4} \end{array}  $ & $ 1.81\cdot 10^{-11}$ & $ 1.87\cdot 10^{-10}$ &${\bf.0969896}$\\\hline
${\bf14}$ &$\begin{array}{c}  \texttt{1.dd3b9fcfb0a40}\cdot 2^{-6}\\ \texttt{1.dd3b9fc4fe280}\cdot 2^{-6} \end{array}  $ & $ 1.95\cdot 10^{-11}$ & $ 6.69\cdot 10^{-10}$&${\bf.0291280}$ \\\hline
${\bf15}$  &$\begin{array}{c} \texttt{1.041ad9e9a4a40}\cdot 2^{-7}\\ \texttt{1.041ad9d3c81c0}\cdot 2^{-7} \end{array}  $ & $ 1.99\cdot 10^{-11}$ & $ 2.51\cdot 10^{-9}$&${\bf.0079377}$\\\hline
${\bf16}$  &$\begin{array}{c} \texttt{1.06a96f3812600}\cdot 2^{-9}\\ \texttt{1.06a96ee01a800}\cdot 2^{-9} \end{array}  $ & $ 2.01\cdot 10^{-11}$ & $ 9.99\cdot 10^{-9}$&${\bf.0020040}$\\\hline
${\bf17}$ &$\begin{array}{c}  \texttt{1.f0220409a7800}\cdot 2^{-12}\\ \texttt{1.f0220148d4000}\cdot 2^{-12}\end{array}  $ & $ 2.01\cdot 10^{-11}$ & $ 4.24\cdot 10^{-8}$&${\bf.0^347315}$ \\\hline
${\bf18}$  &$\begin{array}{c} \texttt{1.b89e668fe0000}\cdot 2^{-14}\\ \texttt{1.b89e5b8b88000}\cdot 2^{-14}\end{array}  $ & $ 2.01\cdot 10^{-11}$ & $ 1.91\cdot 10^{-7}$&${\bf.0^310505}$\\\hline
${\bf19}$ &$\begin{array}{c}  \texttt{1.716ec66930000}\cdot 2^{-16}\\ \texttt{1.716e9a56c8000}\cdot 2^{-16} \end{array}  $ & $ 2.01\cdot 10^{-11}$ & $ 9.11\cdot 10^{-7}$&${\bf.0^422020}$ \\\hline
${\bf20}$  &$\begin{array}{c} \texttt{1.2560672380000}\cdot 2^{-18}\\ \texttt{1.255fb6d940000}\cdot 2^{-18} \end{array}  $ & $ 2.01\cdot 10^{-11}$ & $ 4.59\cdot 10^{-6}$&${\bf .0^54371?}$\\\hline
${\bf21}$ &$\begin{array}{c} \texttt{1.ba948eba00000}\cdot 2^{-21}\\ \texttt{1.ba8f0c6600000}\cdot 2^{-21} \end{array}  $ & $ 2.01\cdot 10^{-11}$ & $ 2.44\cdot 10^{-5}$&${\bf.0^6824?}$ \\\hline
${\bf22}$ &$\begin{array}{c} \texttt{1.3de6c45c00000}\cdot 2^{-23}\\ \texttt{1.3dd0bb1000000}\cdot 2^{-23} \end{array}  $ & $ 2.01\cdot 10^{-11}$ & $ 1.36\cdot 10^{-4}$ &${\bf.0^614?}$\\\hline
${\bf23}$  &$\begin{array}{c} \texttt{1.b411bb0000000}\cdot 2^{-26}\\ \texttt{1.b361706000000}\cdot 2^{-26} \end{array}  $ & $ 2.01\cdot 10^{-11}$ & $ 7.91\cdot 10^{-4}$&${\bf.0^7253?}$\\\hline
${\bf24}$  &$\begin{array}{c} \texttt{1.1ef7a00000000}\cdot 2^{-28}\\ \texttt{1.1c36758000000}\cdot 2^{-28} \end{array}  $ & $ 2.01\cdot 10^{-11}$ & $ 4.83\cdot 10^{-2}$&${\bf.0^841?}$\\\hline
${\bf25}$  &$\begin{array}{c} \texttt{1.71ba640000000}\cdot 2^{-31}\\ \texttt{1.5bb1180000000}\cdot 2^{-31} \end{array}  $ & $ 2.01\cdot 10^{-11}$ & $ 3.08\cdot 10^{-2}$&${\bf.0^96?}$\\\hline
${\bf26}$  &$\begin{array}{c} \texttt{1.0480200000000}\cdot 2^{-33}\\ \texttt{1.58b6000000000}\cdot 2^{-34} \end{array}  $ & $ 2.01\cdot 10^{-11}$ & $2.04\cdot 10^{-1}$&${\bf.0^{9}?}$\\\hline
\end{tabular}}\caption{Upper and lower bounds $\overline{p},\underline{p}$ for $\P( \max_{i=1}^{d-2} N_i+N_{i+1}+N_{i+2} \ge k)$ with $N \sim \mathrm{M}_{n,p}$, $n=500$, $d=365$, $p=(1/d,\ldots, 1/d)$ and $k \in \{5,\ldots,26\}$.}
\label{tab:ScanExco}
\end{table}

The maximal accuracy with respect to complementation of probabilities is defined as
\[ e_{\mathrm{rel,c}}(p):= \max(e_{\mathrm{rel}}(p),e_{\mathrm{rel}}(1-p))\]
Easy calculation yields
\[ e_{\mathrm{rel}}(p)=\left\{\begin{array} {cl}
\infty &0< p <2^{-53} \text{ or } 1-2^{-53}<p<1 \\  
\frac{1}{2m+1} & m \in \{1,\ldots, 2^{52}-1\}, \\ & p \in 2^{-53}\cdot{]m ,m+1[} \text{ or } p \in  1-2^{-53}\cdot{]m ,m+1[} \\
0 & p \in \mathrm{IEEE-Double} \cap [0,1]
\end{array}\right.
\]

\subsubsection{Computation Time and Space}\label{TimeSpace}
Besides the accuracy of the algorithm, there are two other problems that matter: Time and space needed to compute the probability. 

The implementation of the multinomial scan algorithm we made needs to store $2* {n+\ell \choose \ell}$ Double-Precision-Numbers. Each Double-Precision-Number needs $8$ Bytes.
For example, for the scan width $\ell = 3$, on a computer with 16 GByte memory, we were able to compute Scan-Probabilities for up to approximately $n=1700$ in double-precision. Using Single-Precision-Numbers, which take only $4$ Bytes, we could compute up to $n=2150$, but the accuracy is worse than in double-precision computations, as Table~\ref{tab:ScanExSingle} demonstrates. 
The {\bf IEEE-Single-Precision-Number-System} is the set
\[ \mathrm{IEEE\text{-}Single} :=  \pm F \cup \pm G \cup \{0,-\infty,\infty
\}\]
with
$ F:=\left\{m*2^{e}: m \in \{2^{23},\ldots, 2^{24}-1\}, e \in \{-149,\ldots, 104\} \right\}$
and 
$G:=\left\{k*{2^{-149}}: k \in \{1,\ldots,2^{23}-1\}\right\}$,
compare \cite{IEEE754Standard}.
In the third coloumn of Table~\ref{tab:ScanExSingle} we listed the first digits of the computed bounds $\underline{p},\overline{p}$ in decimal format.

\begin{table}
{\scriptsize
\begin{tabular}{c||c|l|c|c||l}
$k$ &\multicolumn{1}{|c|}{$\underline{p},\overline{p}$}&\multicolumn{1}{|c|}{$\underline{p},\overline{p}$} &  $e_{\mathrm{abs}}$ & $e_{\mathrm{rel}}$&$\mathrm{approx}$\\\hline\hline
${\bf 4}$ & \multicolumn{1}{|c|}{$0$}& \multicolumn{1}{|c|}{$0$}& \multicolumn{1}{|c|}{$0$}& $ 0$ & $ 0$ \\ \hline
${\bf5}$  &$\begin{array}{c}\texttt{1.974c00}\cdot 2^{-135} \\0\   \end{array}  $ &$\begin{array}{l}.0^{40}3652... \\ {\centering 0}\   \end{array}  $  & $ 1.83\cdot 10^{-41}$ & $ 1.04\cdot 10^{-2}$&$.0^{40}?$\\\hline
${\bf6}$ &$\begin{array}{c} \texttt{1.bcc5a4}\cdot 2^{-67}\\ \texttt{1.b39300}\cdot 2^{-67} \end{array}  $ &$\begin{array}{l}.0^{19}1177... \\.0^{19}1152...\   \end{array}  $  &$ 1.22\cdot 10^{-22}$&$ 1.04\cdot 10^{-2}$&$.0^{19}11?$ \\\hline
${\bf7}$  &$\begin{array}{c} \texttt{1.bbb862}\cdot 2^{-27} \\ \texttt{1.b28b40}\cdot 2^{-27}  \end{array}  $ &$\begin{array}{l}.0^712913... \\.0^712646... \   \end{array}  $  & $1.34\cdot 10^{-10}$ & $ 1.04\cdot 10^{-2}$&$.0^712?$\\\hline
${\bf8}$  &$\begin{array}{c}\texttt{1.9d02a2}\cdot 2^{-11} \\ \texttt{1.947834}\cdot 2^{-11}  \end{array}  $ &$\begin{array}{l}.0^378775... \\.0^377146... \   \end{array}  $  & $ 8.15\cdot 10^{-6}$ & $ 1.04\cdot 10^{-2}$&$.0^37?$\\\hline
${\bf9}$  &$\begin{array}{c} \texttt{1.11da84}\cdot 2^{-4}  \\ \texttt{1.0c30d0}\cdot 2^{-4} \end{array}  $ &$\begin{array}{l}.0668587... \\.0654762...\   \end{array}  $  & $ 6.91\cdot 10^{-4}$ & $ 1.04\cdot 10^{-2}$&$.06?$\\\hline
${\bf10}$ &$\begin{array}{c} \texttt{1.867cac}\cdot 2^{-2} \\ \texttt{1.7e699a}\cdot 2^{-2} \end{array}  $  &$\begin{array}{l}.3813349... \\.3734497...\   \end{array}  $ & $ 3.94\cdot 10^{-3}$ & $ 1.04\cdot 10^{-2}$&$.3?$\\\hline
${\bf11}$  &$\begin{array}{c} \texttt{1.7511fc}\cdot 2^{-1} \\ \texttt{1.6d5b2a}\cdot 2^{-1} \end{array}  $  &$\begin{array}{l}.7286528... \\.7135861...\   \end{array}  $ & $ 7.53\cdot 10^{-3}$ & $ 2.7\cdot 10^{-2}$&$.7?$\\\hline
${\bf12}$  &$\begin{array}{c} \texttt{1.d331e6}\cdot 2^{-1} \\ \texttt{1.c988cc}\cdot 2^{-1} \end{array}  $  &$\begin{array}{l}.9124900... \\.8936218...\   \end{array}  $ & $ 9.43\cdot 10^{-3}$ & $ 9.7\cdot 10^{-2}$&$.?$\\\hline
${\bf13}$ &$\begin{array}{c} \texttt{1.f64e04}\cdot 2^{-1} \\ \texttt{1.ebeb16}\cdot 2^{-1} \end{array}  $  &$\begin{array}{l}.9810639... \\.9607779...\   \end{array}  $ & $ 1.01\cdot 10^{-2}$ & $ 3.49\cdot 10^{-1}$ &$.9?$\\\hline
${\bf14}$ &$\begin{array}{c} 1 \\ \texttt{1.f6a7a6}\cdot 2^{-1} \end{array}  $  &$\begin{array}{c}1 \\.9817478...\   \end{array}  $ & $1-\underline{p}$ & $\infty$&$.9?$ \\\hline
${\bf15}$  &$\begin{array}{c} 1 \\ \texttt{1.f9a956}\cdot 2^{-1} \end{array}  $  &$\begin{array}{c}1 \\.9876200...\   \end{array}  $ & $1-\underline{p}$ & $\infty$&$.9?$\\\hline
${\bf16}$  &$\begin{array}{c} 1 \\ \texttt{1.fa6fe6}\cdot 2^{-1} \end{array}  $  &$\begin{array}{c}1 \\.9891349...\   \end{array}  $ &$1-\underline{p}$ & $ \infty$&$.9?$\\\hline
${\bf17}$ &$\begin{array}{c}1 \\ \texttt{1.fa9fa0}\cdot 2^{-1} \end{array}  $  &$\begin{array}{c}1 \\.9894990...\   \end{array}  $ & $1-\underline{p}$ & $ \infty$&$.9?$ \\\hline
${\bf18}$  &$\begin{array}{c}1 \\ \texttt{1.faaa68}\cdot 2^{-1} \end{array}  $ &$\begin{array}{c}1 \\.9895813...\   \end{array}  $  &$1-\underline{p}$ & $ \infty$&$.9?$\\\hline
${\bf19}$ &$\begin{array}{c} 1 \\ \texttt{1.faacb6}\cdot 2^{-1} \end{array}  $ &$\begin{array}{c}1 \\.9895989...\   \end{array}  $  & $1-\underline{p}$& $ \infty$&$.9?$ \\\hline
${\bf20}$  &$\begin{array}{c} 1 \\ \texttt{1.faad2c}\cdot 2^{-1} \end{array}  $ &$\begin{array}{c}1 \\.9896024...\   \end{array}  $  & $1-\underline{p}$ & $ \infty$&$.9?$\\\hline
${\bf21}$ &$\begin{array}{c}1 \\ \texttt{1.faad3c}\cdot 2^{-1} \end{array}  $ &$\begin{array}{c}1 \\.9896029...\   \end{array}  $  &$1-\underline{p}$ & $ \infty$&$.9?$ \\\hline
${\bf22}$ &$\begin{array}{c} 1 \\ \texttt{1.faad40}\cdot 2^{-1} \end{array}  $ &$\begin{array}{c}1\\.9896030...\   \end{array}  $  & $1-\underline{p}$& $ \infty$ &$.9?$\\\hline
${\bf23}$  &$\begin{array}{c} 1 \\ \texttt{1.faad44}\cdot 2^{-1} \end{array}  $ &$\begin{array}{c}1\\.9896031...\   \end{array}  $  & $1-\underline{p}$& $ \infty$&$.9?$\\\hline
${\bf24}$  &$\begin{array}{c} 1 \\ \texttt{1.faad46}\cdot 2^{-1} \end{array}  $ &$\begin{array}{c}1 \\.9896032...\   \end{array}  $  & $1-\underline{p}$ & $ \infty$&$.9?$\\\hline
${\bf25}$  &$\begin{array}{c} 1 \\ \texttt{1.faad46}\cdot 2^{-1} \end{array}  $ &$\begin{array}{c}1\\.9896032...\   \end{array}  $  &$1-\underline{p}$ & $\infty$&$.9?$\\\hline
\end{tabular}}\caption{Upper and lower bounds $\overline{p},\underline{p}$ for $\P( \max_{i=1}^{d-2} N_i+N_{i+1}+N_{i+2} \le k)$ with $N \sim \mathrm{M}_{n,p}$, $n=500$, $d=365$, $p=(1/d,\ldots, 1/d)$ and $k \in \{4,\ldots,25\}$, computed in single-precision.}\label{tab:ScanExSingle}
\end{table}

The time that it takes to compute a rectangle scan probability for a multinomially distributed random vector in single precision does not differ much from the time it takes in double-precision, examples are listed in Table~\ref{tab:examplestimes}.

\begin{table}
\begin{tabular}{c|c|c|c|r|c|r}
$n$ & $d$ & $k$ & $\underline{p}$ (double) & time (double) & $\underline{p}$ (single) & time (single)$\phantom{\int_{X_x}}$\\\hline
$100$ & $365$ & $6$ & $0.9934578$ & 1 s& $0.9914927$& 1 s \\
$500$ &  $365$ & $13$ & $0.9708720$ & 1 min 27 s& $0.9607779$& 1 min 25 s\\
$1000$ & $365$ & $20$ & $0.9604324$ & 49 min 39 s& $0.9405573$& 10 min 19 s \\
$1500$ & $365$ & $27$ & $0.9739303$ &3 h 22 min 26 s& $0.9438554$& 2 h 44 min 00 s \\
$1700$ & $365$ & $29$ & $0.9610315$ &5 h 47 min 01 s& $0.9274842$&4 h 58 min 21 s\\
$1750$ & $365$ & $31$ & x & no computation possible & $0.9516879$& 6 h 36 min 48 s \\
$2150$ & $365$ & $37$ & x & no computation possible & $0.9507257$& 12 h 47 min 22 s\\
\end{tabular}\caption{Computation time for a lower bound $\underline{p}$ for  the scan probability  $\P( \max_{i=1}^{d-2} N_i+N_{i+1}+N_{i+2} \le k)$ for a random vector $(N_1,\ldots, N_d)$ with the multinomial distribution $\mathrm{M}_{n,p}$ with $p=(1/d,\ldots,1/d)$,  in single precision and in double precision. Details are described in Subsubsection~\ref{TimeSpace}}
\label{tab:examplestimes}
\end{table}

\subsection{Binomial Probabilities}
We use the following algorithm to compute the multinomial transition probabilities, that are binomial. 
\begin{verbatim}
double bnp(unsigned int k,unsigned int n, double p, double q){
if (2*k>n) return(bnp(n-k,n,q,p));
double f=1.0;
unsigned int j0=0,j1=0,j2=0;
while ( (j0<k) | (j1<k)| (j2<n-k) )
{ 
if( (j0<k) && (f<1) ) {j0++; f*= (double)(n-k+j0)/(double)j0;}
 else { if(j1<k) {j1++; f*= p;} else {j2++; f*= q;} }
}
return f;
}
\end{verbatim}
For upper bounds $p$ and $q$ and ``rounding up'' mode the algorithm calculates an upper bound for the exact binomial probability.
For lower bounds $p$ and $q$ and ``rounding down'' mode the algorithm calculates a lower bound for the exact binomial probability.

\subsection{Hypergeometric Probabilities}
We use the following algorithm to compute the multivariate hypergeometric transition probabilities, that are univariate hypergeometric.
Table~\ref{HypEx} contains the distribution function of the random variable $\max_{i=1}^{d-2} (N_i+N_{i+1}+N_{i+2})$ with $N\sim \mathrm{H}_{n,m}$ with $n=500$, $d=365$ and $m=(10,\ldots,10)$.
\begin{verbatim}
double hyp(int n, int r, int b, int k){
double f=1.0;
int j0=0,j1=0,j2=0;
while ( (j0<k)| (j1<n-k) | (j2<n) ){
if(f<1 && ( (j0<k) | (j1<n-k)) ){
if (j0<k) { f*=(double)(r-j0)/(j0+1);j0++;} 
else {if (j1<n-k) { f*=(double)(b-j1)/(j1+1);j1++;} 
else if (j2<n) {f*=(double)(r+b-j2)/(j2+1);j2++;}}
}
else if (j2<n) { f*=(double)(j2+1)/(r+b-j2);j2++;}
}
return f;
}
\end{verbatim}
In the ``rounding up'' mode the algorithm calculates an upper bound for the exact hypergeometric probability.
In the ``rounding down'' mode the algorithm calculates a lower bound for the exact hypergeometric probability.

\begin{table}
{\scriptsize
\begin{tabular}{c||c|c|c||l}
$k$ &\multicolumn{1}{|c|}{$[\underline{p},\overline{p}]$} &  $e_{\mathrm{abs}}$ & $e_{\mathrm{rel}}$&$\mathrm{approx}$\\\hline\hline
${\bf 4}$ & \multicolumn{1}{|c|}{$0$}& \multicolumn{1}{|c|}{$0$}& $ 0$ & ${\bf 0}$ \\ \hline
${\bf5}$  &$\begin{array}{c} \texttt{1.94a78cce6bf78}\cdot 2^{-160} \\ \texttt{1.94a78cce088a0}\cdot 2^{-160} \end{array}  $ & $ 3.09\cdot 10^{-59}$ & $ 2.86\cdot 10^{-11}$&${\bf.0^{47}10815}$\\\hline
${\bf6}$ &$\begin{array}{c} \texttt{1.0acc3dae78827}\cdot 2^{-55} \\ \texttt{1.0acc3dae36d0e}\cdot 2^{-55} \end{array}  $ &$ 8.29\cdot 10^{-28}$&$ 2.87\cdot 10^{-11}$&${\bf.0^{16}28926}$ \\\hline
${\bf7}$  &$\begin{array}{c} \texttt{1.591d6928456d6}\cdot 2^{-20} \\ \texttt{1.591d6927f05d0}\cdot 2^{-20} \end{array}  $ & $ 3.69\cdot 10^{-17}$ & $ 2.87\cdot 10^{-11}$&${\bf.0^{5}12856}$\\\hline
${\bf8}$  &$\begin{array}{c} \texttt{1.40ac4ad3593a9}\cdot 2^{-7} \\ \texttt{1.40ac4ad30a26f}\cdot 2^{-7} \end{array}  $ & $ 2.81\cdot 10^{-13}$ & $ 2.87\cdot 10^{-11}$&${\bf.0097862}$\\\hline
${\bf9}$  &$\begin{array}{c} \texttt{1.df885f4b6ceae}\cdot 2^{-3} \\ \texttt{1.df885f4af6a55}\cdot 2^{-3} \end{array}  $ & $ 1.91\cdot 10^{-11}$ & $ 2.87\cdot 10^{-11}$&${\bf.2341468}$\\\hline
${\bf10}$ &$\begin{array}{c} \texttt{1.546bd869a7f5e}\cdot 2^{-1} \\ \texttt{1.546bd86953fe9}\cdot 2^{-1} \end{array}  $ & $ 2.60\cdot 10^{-11}$ & $ 5.70\cdot 10^{-11}$&${\bf.6648853}$ \\\hline
${\bf11}$  &$\begin{array}{c} \texttt{1.cec1ebd5b5793}\cdot 2^{-1} \\ \texttt{1.cec1ebd543545}\cdot 2^{-1} \end{array}  $ & $ 2.81\cdot 10^{-11}$ & $ 2.70\cdot 10^{-10}$&${\bf.9038233}$\\\hline
${\bf12}$  &$\begin{array}{c} \texttt{1.f4e8088a29393}\cdot 2^{-1} \\ \texttt{1.f4e80889adab5}\cdot 2^{-1} \end{array}  $ & $ 2.86\cdot 10^{-11}$ & $ 1.30\cdot 10^{-9}$&${\bf.9783328}$\\\hline
${\bf13}$ &$\begin{array}{c} \texttt{1.fde26f4234a4c}\cdot 2^{-1} \\ \texttt{1.fde26f41b6dfc}\cdot 2^{-1} \end{array}  $ & $ 2.87\cdot 10^{-11}$ & $ 6.92\cdot 10^{-9}$ &${\bf.9958682}$\\\hline
${\bf14}$ &$\begin{array}{c} \texttt{1.ffa6780ca228e}\cdot 2^{-1} \\ \texttt{1.ffa6780c23f48}\cdot 2^{-1} \end{array}  $ & $ 2.87\cdot 10^{-11}$ & $ 4.20\cdot 10^{-8}$&${\bf.9^331693}$ \\\hline
${\bf15}$  &$\begin{array}{c} \texttt{1.fff314a41d498}\cdot 2^{-1} \\ \texttt{1.fff314a39f023}\cdot 2^{-1} \end{array}  $ & $ 2.87\cdot 10^{-11}$ & $ 2.91\cdot 10^{-7}$&${\bf.9^401433}$\\\hline
${\bf16}$  &$\begin{array}{c} \texttt{1.fffe5ec7c001c}\cdot 2^{-1} \\ \texttt{1.fffe5ec741b7c}\cdot 2^{-1} \end{array}  $ & $ 2.87\cdot 10^{-11}$ & $ 2.31\cdot 10^{-6}$&${\bf.9^487566}$\\\hline
${\bf17}$ &$\begin{array}{c} \texttt{1.ffffd2049693f}\cdot 2^{-1} \\ \texttt{1.ffffd20418497}\cdot 2^{-1} \end{array}  $ & $ 2.87\cdot 10^{-11}$ & $ 2.10\cdot 10^{-5}$&${\bf.9^58629?}$ \\\hline
${\bf18}$  &$\begin{array}{c} \texttt{1.fffffb9535338}\cdot 2^{-1} \\ \texttt{1.fffffb94b6e8e}\cdot 2^{-1} \end{array}  $ & $ 2.87\cdot 10^{-11}$ & $ 2.18\cdot 10^{-4}$&${\bf.9^6868?}$\\\hline
${\bf19}$ &$\begin{array}{c} \texttt{1.ffffffa1dc0a3}\cdot 2^{-1} \\ \texttt{1.ffffffa15dbfb}\cdot 2^{-1} \end{array}  $ & $ 2.87\cdot 10^{-11}$ & $ 2.61\cdot 10^{-3}$&${\bf.9^78?}$ \\\hline
${\bf20}$  &$\begin{array}{c} \texttt{1.fffffff9717b1}\cdot 2^{-1} \\ \texttt{1.fffffff8f330a}\cdot 2^{-1} \end{array}  $ & $ 2.87\cdot 10^{-11}$ & $ 3.63\cdot 10^{-2}$&${\bf .9^9?}$\\\hline
${\bf21}$ &$\begin{array}{c} \texttt{1.ffffffffd3bf1}\cdot 2^{-1} \\ \texttt{1.ffffffff55749}\cdot 2^{-1} \end{array}  $ & $ 2.87\cdot 10^{-11}$ & $ 5.88\cdot 10^{-1}$&${\bf.9^{10}?}$ \\\hline
${\bf22}$ &$\begin{array}{c} 1 \\ \texttt{1.ffffffffbb782}\cdot 2^{-1} \end{array}  $ & $1-\underline{p}$ & $\infty$ &${\bf.9^{10}?}$\\\hline
${\bf23}$  &$\begin{array}{c} 1 \\ \texttt{1.ffffffffc0de3}\cdot 2^{-1} \end{array}  $ & $1-\underline{p}$& $\infty$&${\bf.9^{10}?}$\\\hline
${\bf24}$  &$\begin{array}{c}1 \\ \texttt{1.ffffffffc11b4}\cdot 2^{-1} \end{array}  $ & $1-\underline{p}$ &$\infty$&${\bf.9^{10}?}$\\\hline
${\bf25}$  &$\begin{array}{c} 1 \\ \texttt{1.ffffffffc11d9}\cdot 2^{-1} \end{array}  $ & $1-\underline{p}$ & $\infty$&${\bf.9^{10}?}$\\\hline
${\bf26}$  &$\begin{array}{c} 1 \\ \texttt{1.ffffffffc11d9}\cdot 2^{-1} \end{array}  $ & $1-\underline{p}$ & $\infty$&${\bf.9^{10}?}$\\\hline
\end{tabular}}\caption{Upper and lower bounds $\overline{p},\underline{p}$ for $\P( \max_{i=1}^{d-2} N_i+N_{i+1}+N_{i+2} \le k)$ with $N \sim \mathrm{H}_{n,m}$, $n=500$, $d=365$, $m=(10,\ldots, 10)$ and $k \in \{4,\ldots,26\}$.}\label{HypEx}
\end{table}

\section*{Acknowledgements}
The author wishes to thank Lutz Mattner for many valuable hints and discussions. The author wishes to thank Christoph Tasto and Todor Dinev for proofreading.

\end{document}